\newtheorem{theorem}{Theorem}
\newtheorem{prop}{Proposition}
\begin{document}
%
\title{Dynamic Stale Synchronous Parallel Distributed Training for Deep Learning}


\author{\IEEEauthorblockN{Xing Zhao\IEEEauthorrefmark{1}, Aijun An\IEEEauthorrefmark{1}, Junfeng Liu\IEEEauthorrefmark{2}, Bao Xin Chen\IEEEauthorrefmark{1}}
\IEEEauthorblockA{\IEEEauthorrefmark{1}Department of Electrical Engineering and Computer Science\\
York University,
Toronto, Canada\\
\{xingzhao, aan, baoxchen\}@cse.yorku.ca}
\IEEEauthorblockA{\IEEEauthorrefmark{2}Platform Computing\\
IBM Canada, Markham, Canada\\
jfliu@ca.ibm.com}
}


%


\maketitle

\begin{abstract}
Deep learning is a popular machine learning technique and has been applied to many real-world problems, ranging from computer vision to natural language processing. 
However, training a deep neural network is very time-consuming, especially on big data. 
It has become difficult for a single machine to train a large model over large datasets. A popular solution is to distribute and parallelize the training process across multiple machines using the parameter server framework. In this paper, we present a distributed paradigm on the parameter server framework called \textit{Dynamic Stale Synchronous Parallel} (DSSP) which improves the state-of-the-art \textit{Stale Synchronous Parallel} (SSP) paradigm by 
dynamically determining the staleness threshold at the run time. Conventionally to run distributed training in SSP, the user needs to specify a particular stalenes threshold as a hyper-parameter. However, a user does not usually know how to set the threshold and thus often finds a threshold value through trial and error, which is time-consuming. 
Based on workers' recent processing time, our approach DSSP adaptively adjusts the threshold per iteration at running time to reduce the waiting time of faster workers for synchronization of the globally shared parameters (the weights of the model), and consequently increases the frequency of parameters updates (increases iteration throughput), which speedups the convergence rate. 
We compare DSSP with other paradigms such as \textit{Bulk Synchronous Parallel} (BSP), \textit{Asynchronous Parallel} (ASP), and SSP by running deep neural networks (DNN) models over GPU clusters in both homogeneous and   heterogeneous environments. The results show that 
in a heterogeneous environment where the cluster consists of mixed models of GPUs, DSSP converges to a higher accuracy much earlier than SSP and BSP and 
performs similarly to ASP. 
In a homogeneous distributed cluster, DSSP has more stable and slightly better performance than SSP and ASP, and converges much faster than BSP.


\end{abstract}

\begin{IEEEkeywords}
distributed deep learning, parameter server, BSP, ASP, SSP, GPU cluster.
\end{IEEEkeywords}

%
\IEEEpeerreviewmaketitle

\section{Introduction}
The parameter server framework{\cite{dean2012large}\cite{chen2015mxnet}} 
has been developed to support distributed training of large-scale machine learning (ML) models (such as deep neural networks{\cite{krizhevsky2012imagenet}\cite{Szegedy_2016_CVPR}\cite{simonyan2014very}}) over very large data sets, such as Microsoft COCO{\cite{lin2014microsoft}}, ImageNet 1K{\cite{krizhevsky2012imagenet}} and ImageNet 22K{\cite{chilimbi2014project}}.
Training a deep model using a large-scale cluster with an efficient distributed paradigm reduces the training time from weeks on a single server to days or hours.

Since the DistBelief{\cite{dean2012large}} framework was developed in 2012, distributed machine learning has attracted the attention of many ML researchers and system engineers.
In 2014, the Parameter Server architecture {\cite{li2014scaling}} was launched.
Its coarse-grained parallelism shows a significant speedup in convergence over 6000 servers. 
In 2015, a fine-grained parallel and distributed system Petuum{\cite{xing2015petuum}} was developed to support customized distributed training for particular machine learning algorithms instead of providing a general distributed framework to many machine learning algorithms based on, e.g., Hadoop{\cite{landset2015survey}} and Spark{\cite{meng2016mllib}}. By then a global competition has begun on developing efficient distributed machine learning training platforms. Baidu published the distributed training system PaddlePaddle{\cite{mao2014deep}} in 2015 for deep learning, which inherits the parameter server framework. Alibaba released KunPeng{\cite{zhou2017kunpeng}} in 2017, an variation of the parameter server, which was claimed as an universal distributed platform. Due to its efficient scalable network communication design, the parameter server framework can be found in most distributed platforms in practice regardless whether they are implemented in fine-grained or coarse-grained parallelism.

In a nutshell, the \textbf{parameter server framework} consists of a logic \textit{server} and many \textit{workers}. Workers are all connected to the server. 
The server usually maintains the \textit{globally shared weights} by aggregating weight updates from the workers and updating the global weights. It provides a central storage for the workers to upload their computed updates (by the \texttt{push} operation) and fetch the up-to-date global weights (by the \texttt{pull} operation).  
The parameter server framework supports \textit{model parallelism} and \textit{data parallelism} {\cite{lee2014model}\cite{zhou2016convergence}}. In model parallelism, a ML model can be partitioned and its components are assigned to a server group (i.e., a distributed logic server) and workers. A worker computes the gradients for a server based on its assigned model partition and data. However, it is difficult to decouple a model due to dependencies between components of the model (e.g., layers of DNN) and the nature of the optimization method (e.g., stochastic gradient descent){\cite{xing2016strategies}}. Thus, model parallelism is not commonly seen in practice. In this work, we focus on data parallelism, in which the training data is partitioned based on the number of workers and a partition is assigned to each worker. A worker machine usually contains a replica of the ML model and is assigned a equal-sized partition of the entire training data. Each worker iterates the following steps: \ding{172} computing the gradients based on a sample or a mini-batch and its local parameters (e.g., weights), \ding{173} sending the gradients as an update to the server, \ding{174} retrieving the latest global weights from the server and \ding{175} assigning the retrieved weights as its local weights. 
We call the time span accumulated from \ding{172} to \ding{175} the \textit{iteration interval}. From the server's perspective, an iteration interval of a worker is the time period between 
two consecutive updates it receives from the worker. 


\subsection{Distributed paradigms for updating the parameters}
There are three paradigms for updating the model parameters during distributed deep learning with data parallelism. 
They are Bulk Synchronous Parallel (BSP){\cite{gerbessiotis1994direct}}, Stale Synchronous Parallel (SSP){\cite{ho2013more}\cite{cui2014exploiting}} and Asynchronous Parallel (ASP){\cite{dean2012large}\cite{recht2011hogwild}\cite{chilimbi2014project}}. 

\subsubsection{Bulk Synchronous Parallel (\textbf{BSP})}
In BSP, all workers send their computed gradients to the parameter server for the global weight update and wait for each other at the end of every iteration for synchronization. 
Once the parameter server receives gradients from all workers and updates the global weights, it sends the latest global weights (parameters) to the workers before each worker starts a new iteration. In this paradigm, every worker starts a new iteration based on the same version of the global weights from the server, that is, the weights are {\it consistent} among all workers. 
%
BSP generally achieves the best accuracy among the three paradigms but takes the most training time due to its frequent synchronization among workers. Synchronization incurs waiting time for faster workers. 
%
Since ML has the fault tolerant property{\cite{xing2016strategies}} (that is, it is robust against minor errors in intermediate calculations) when it uses the iterative convergent optimization algorithm such as stochastic gradient descent (SGD){\cite{zinkevich2010parallelized}},
a more flexible paradigm that uses less synchronization can be applied.

\subsubsection{Asynchronous Parallel (\textbf{ASP})}~\label{asp_converage}
In ASP, all workers send their computed gradients to the parameter server at each iteration but no synchronization is required. Workers do not wait for each other and simply run independently during the entire training. In this case, some slower workers will bring the delayed or staled gradient updates to the globally shared weights on the parameter server. The delayed gradients introduce errors to the iterative convergent method. Consequently, it prolongs the convergent rate of a training model and even diverges the learning of the model when the staled updates are from very old iterations.
Without any synchronization, each worker may obtain a different version of the global weight from the parameter server at the end of its iteration. From the system's point of view, the global weights are {\it inconsistent} to all workers at the beginning of their iterations.
In contrast to BSP, ASP has the least training time for a given number of epochs, but usually yields a much lower accuracy due to the missing synchronization step among workers. Nonetheless, zero or less synchronization for workers usually diffuses the convergence of 
a DNN model{\cite{wang2017accelerating}}. Hence, ASP is not stable in terms of the model convergence.

\subsubsection{Stale Synchronous Parallel (\textbf{SSP})}
SSP is a combination of BSP and ASP. It relies on a policy to switch between ASP and BSP dynamically during the training. The policy is to restrict the number of iterations between the fastest worker(s) and the slowest worker(s) to not exceed more than a user-specified staleness threshold $s$ where $s$ is a natural number. This policy ensures that the difference in the number of iterations among workers is no larger than $s$. Hence, as long as the policy is not violated, 
there is no waiting time among workers. When the threshold is exceeded, the synchronization is forced on fastest workers which are $s$ iterations ahead of the slowest worker. 
One effective implementation of SSP from {\cite{ho2013more}} only forces the fastest worker(s) to wait for the slowest worker(s) and allows the rest continue their iterations. In this distribution paradigm, the parameters (the global weights) of an ML model are considered \textit{inconsistent} \cite{dai2015high} among the workers at the beginning of their iterations when the policy is not violated. However, the inconsistency is limited to a certain extent by the \textit{small} threshold $s$ so that the ML model can still converge (close to an optimum){\cite{wei2015managed}}.

\subsection{Contributions}

SSP is an intermediate solution between BSP and ASP. It is faster than BSP, and guarantees convergence, leading to a more accurate model than ASP. However, in SSP the user-specified staleness threshold is fixed, which leads to two problems. First, it is usually hard for the user to specify a good single threshold since user has no knowledge which value is the best. Choosing a good threshold may involve manually searching in an integer range via numerous trials. Also, a DNN model involves many other hyperparameters (such as the number of layers and the number of nodes in each layer). When these parameters change, the same searching trials have to be repeated again to fine-tuning the staleness threshold. Second, a single fixed value may not be suitable for the whole training process. An ill-specified value may cause the fastest workers to wait for longer time than necessary.
For example, Figure \ref{fig:worker_intervals} shows that if the threshold is exceeded at the red solid line, the waiting time for the fastest worker if it starts waiting right away is more than the waiting time if it continues but starts waiting at the green solid line. In fact, the waiting time for it to start waiting at the yellow solid line is the minimum of the three. However, we have to make sure that the difference in iterations between the fastest and slowest workers is not too large. Otherwise, too many staled updates may delay the convergence of the ML model{\cite{li2014communication}}. 

To solve these problems, we propose an adaptive SSP scheme named \textit{Dynamic Stale Synchronous Parallel} (\textbf{DSSP}). Our approach dynamically selects a threshold value from a given range in the training process based on the statistics of the real-time processing speed of distributed computing resources. It allows the threshold to change over time and also allows different workers to have different threshold values, adapting to the run-time environment.
To achieve this purpose, we design a \textit{synchronization controller} at the server side to 
determine how many iterations the current fastest worker should continue running at the end of its iteration upon the excess of the lower bound of a user-specified staleness threshold range. 
The decision is made at run time by estimating the future waiting times of workers based on the timestamps of their previous \textit{push requests} and selecting a time point in the range that leads to the least estimated waiting time. In this way, we enable the parameter server to dynamically adjust or relax the synchronization of workers during the training based on the run-time environment of the distributed system.

In addition, although experiments have been reported on parameter servers with a variety of ML models, experiments of comparing DNN models under different distributed paradigms are rarely seen in the literature. In this paper, we look into four distributed paradigms (i.e., BSP, ASP, SSP and DSSP) and compare their performance by training three DNN models on two image datasets using MXNet{\cite{chen2015mxnet}} which provides BSP and ASP. 
We implemented both SSP and DSSP in MXNet, report and analyze our findings from the experiments. 

\section{Related Work}
There are variety of approaches to optimizing the distributed paradigms under the parameter server framework. Generally, they can be categorized into three basis streams: BSP, ASP and SSP. Chen et al. {\cite{chen2016revisiting}} try to optimize the BSP by adding few backup workers. That is to train $N$ workers in BSP, they add $c$ backup workers so that there are $N + c$ workers during the training. By the end of each iteration, the server only takes the first $N$ arrived updates and drops the $c$ slower arrived updates from the stranglers for weight synchronization. In this case, the training data allocated to the $c$ random slower workers are partially wasted in each iteration.

Hadjis et al. {\cite{hadjis2016omnivore}} optimize ASP from machine learning's perspective. It adjusts the momentum based on the degree of asynchrony (staleness of the gradients). Then, it uses the tuned momentum to mitigate the divergent direction that staled gradients introduce.  
Meanwhile, model parallel computing is applied here for better performance where a DNN model is split into two parts: convolutional layers and fully connected layers. Both parts are computed parallelly and concurrently.

Zhang and Kwok{\cite{zhang2014asynchronous}} propose to use asynchronous distribution to optimize synchronous ADMM algorithm. However, it uses \textit{partial barrier} to make the fastest workers wait for the slowest workers and \textit{bounded delay} to guarantee that the iterations among workers do not exceed a user specified hyperparameter $\tau$ which is equivalent to the staleness threshold $s$ of SSP in {\cite{ho2013more}}. Then, all these make the approach rather close to SSP than ASP optimization. Bounded delay also appears in {\cite{li2014communication}} and is elaborated in the rest of this section.


Li et al.{\cite{li2014communication}} introduce \textit{bounded delay} which is similar to SSP except that it takes all workers' iterations into account instead of letting each worker count its own iteration. In order to keep the ML model (global weights) consistent, iterations in sequence are allowed to run concurrently in parallel under the dependency restriction. 
Iteration $t$ depends on iteration $t-k$ if iteration $t$ requires the result of iteration $t-k$ in order to proceed. 
In the bounded delay approach, the number of bounded iterations $k$ is specified by the user, similar to the staleness threshold $s$ in SSP. $k$ means that for a continuous $k$ iterations, every iteration is independent of each other, and they can run concurrently in parallel without waiting for each other. 
When $k$ is exceeded, the fastest iteration $t$ has to wait for the slower iterations behind $t-k$. Imagine iterations are pre-assigned to workers as tasks, then the bounded delay is equivalent to SSP. For example, we have iterations $\{I_1,I_2,I_3,I_4,I_5,I_6\}$ and two workers $P_1$ and $P_2$. Each iteration $I_i$ completes a min-batch of samples. $P_1$ receives $\{I_1,I_3,I_5\}$, $P_2$ receives $\{I_2,I_4,I_6\}$. If $k=3$, then $I_4$ depends on $I_1$, $I_5$ depends on $I_2$, $I_6$ depends on $I_3$. So $P_2$ at $I_4$ has to wait for $P_1$ finishes $I_1$. $P_1$ at $I_5$ has to wait for $P_2$ completes $I_2$. Bounded delay is rather an inflexible scheme by pre-scheduling tasks to workers. Although the authors briefly claim that more consistent paradigms can be developed based on the dependency constraint, no further exploration is provided in their paper. Our work extends this direction and presents a flexible scheduling approach in which $k$ is dynamically assigned at the training time. In our dynamic bounded delay paradigm, every optimal bound $k$ yields the least waiting time for workers by optimizing the synchronization frequency. For the continuous $k$ iterations in which every iteration is independent is adjusted dynamically in the running time to reduce waiting time of coming iterations which depend on the earlier ones.


\begin{figure}[!ht]
    \centering
    \includegraphics[ width=0.483\textwidth]{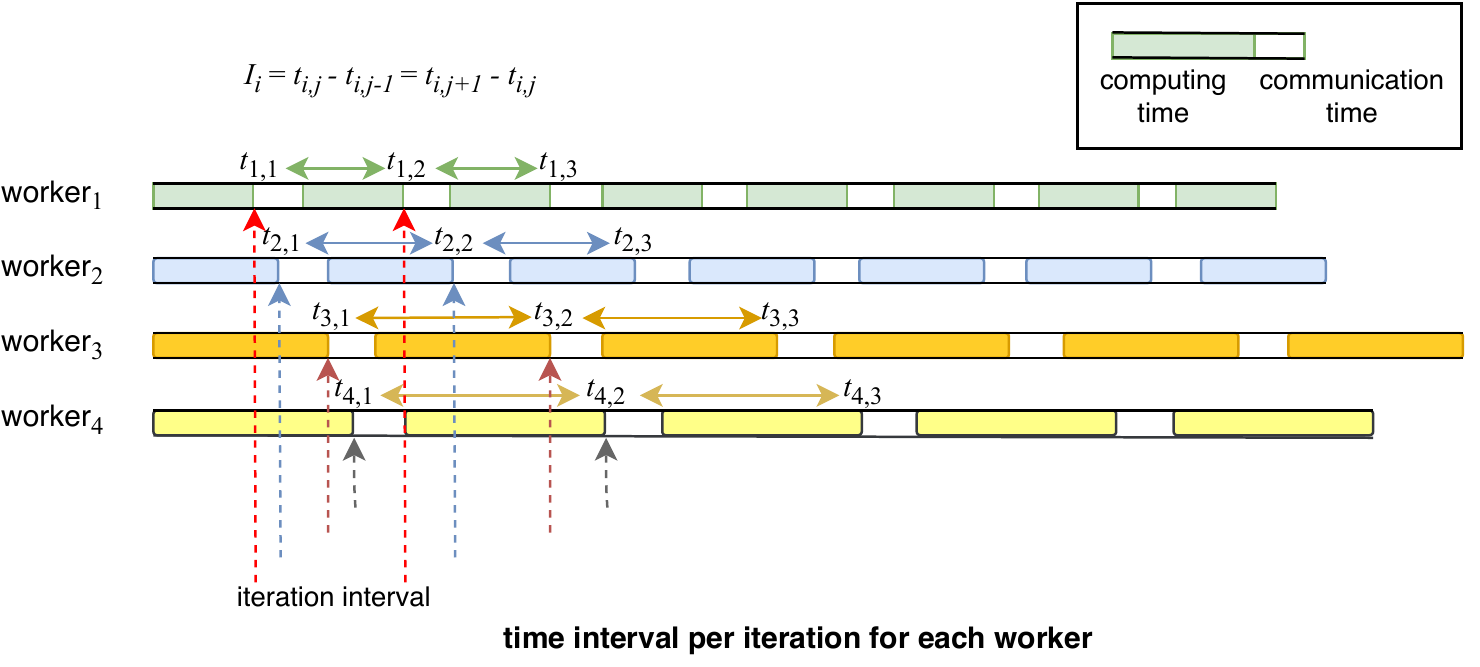}
    \caption{Iteration \textit{interval}s measured by timestamps of \texttt{push} requests from workers. A dotted line represents the time for a \texttt{push} request from a worker to the server. An \textit{interval} consists of \textit{communication period} (blank block) and gradient \textit{computation period} (solid block). 
    }
    \label{fig:push_intervals}
\end{figure}
\begin{figure*}[!ht]
    \centering
    \includegraphics[width=0.88\textwidth]{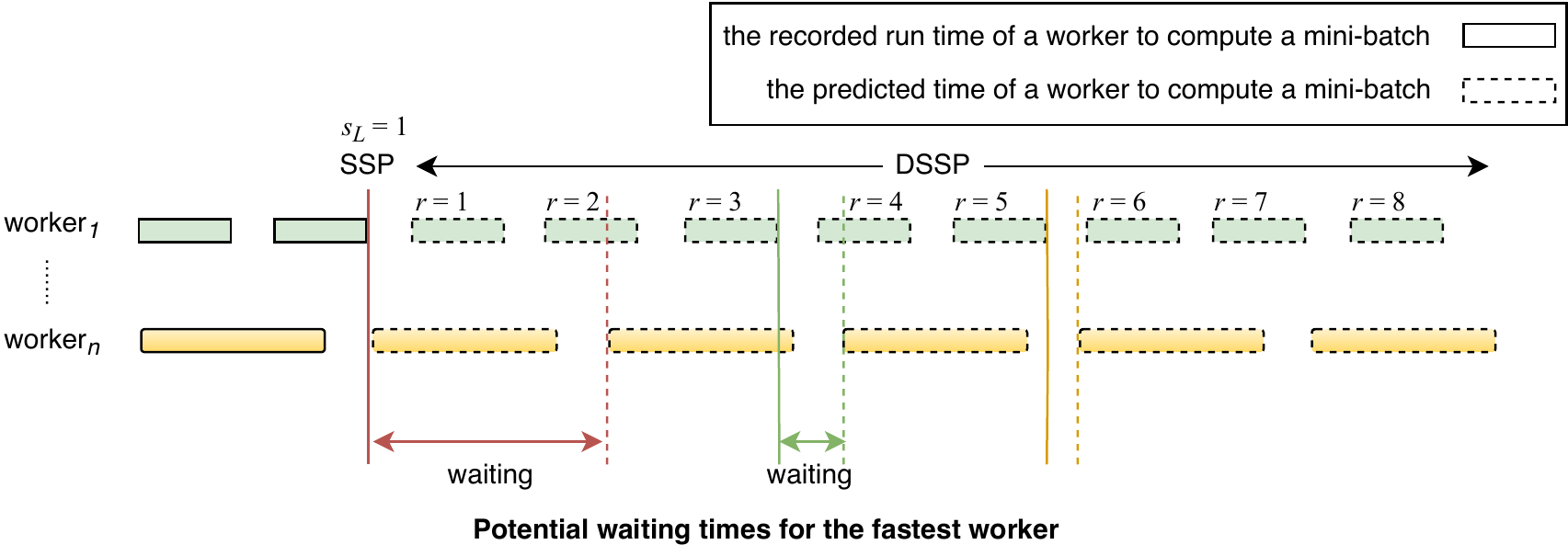}
    \caption{Prediction module finding the \textit{least waiting time} for the \textit{fastest worker} via iteration time intervals of workers. A solid line represents a boundary to stop the fastest workers continuing new iterations for \textit{synchronization} and a dash line represents the end of waiting when the slowest worker completes its running iteration. The solid line is drawn upon a fastest worker sends a \texttt{push} request to the server and waits for the \texttt{\em OK} signal from the server. Once \texttt{\em OK} is received, it pulls the new updated weight from the server and starts a new iteration where the dash line is drawn. The dash line also indicates the time that the \textit{slowest worker} receives a new updated weight via \texttt{pull} request and starts a new iteration. Worker$_1$ is the \textit{fastest worker} and the worker$_n$ is the \textit{slowest worker}. The colored block represents one iteration time. Following SSP, worker$_1$ has to stop at the red solid line. DSSP compares each $r$ value and finds the $r^*$ which gives the least waiting time. Here, $r^*=3$ if $r \in R=[0,4]$. DSSP allows worker$_1$ to run $3$ more iterations and stop at the green solid line.}
    \label{fig:worker_intervals}
\end{figure*}

\section{Dynamic Stale Synchronous Parallel} \label{DSSP_approch}
In this section, we propose the \textit{Dynamic Stale Synchronous Parallel} (\textbf{DSSP}) synchronization method. Instead of using a single and definite staleness threshold as in SSP, DSSP takes a range for the staleness threshold as input, and dynamically determines an optimal value for the staleness threshold during the run time. The value for the threshold can change over time and adapt to the run-time environment. 

\subsection{Problem statement}
Given a \textit{lower bound} and an \textit{upper bound} of staleness thresholds $s_L$ and $s_U$, 
DSSP finds an \textit{optimal} threshold $s^* \in [s_L, s_U]$ for a worker dynamically,
which yields the \textit{minimum waiting time} for the worker to \textit{synchronize} with others, based on the \textit{iteration time} collected from each worker at the run time. 

In other words, DSSP finds an integer $r^* \in R$, where $R=[0, r_{max}]$,  $r_{max}=s_U - s_L$ and $r^*=s^*-s_L$. That is, DSSP finds an \textit{optimal spot} on the time line $R$ to bound the workers for synchronization. Empirically we can find a $r=s-s_L$ that is close to $r^*$, which is the same as finding $s \in [s_L, s_U]$ close to $s^*$.


\subsection{Assumption} ~\label{assumption}
An \textit{iteration interval} of a worker is the time period between 
two \textit{consecutive} updates (i.e., \texttt{push} requests) the server receives from the worker. 
We can measure the length of an iteration interval of a worker by using the timestamps of the \texttt{push} requests sent by the worker (see Figure \ref{fig:push_intervals}). 

We assume that the iteration intervals of a worker in continuous iterations in a short time period are very similar. That is, for contiguous iterations of a worker in a short time period, each iteration has the similar processing time which includes computing gradients over a mini-batch, sending gradients to and receiving updated weights (parameters) from the parameter server. 
Therefore, if we use \textit{the most recent intervals} to estimate the length of next intervals, the error is small under this assumption. Even if the network experiences some instabilities in a short period and we may make some wrong predictions which may lead to some latent updates, 
DSSP can still converge due to the error tolerance of an iterative-convergent method such as Parallelized SGD{\cite{zinkevich2010parallelized}}. 

\subsection{Method}

\begin{algorithm}
 \caption{Dynamic Staled Gradient Method}
 \label{alg:dssp_ps}
 \begin{algorithmic}[1]
 \item[]\hspace{-0.5cm}\textbf{\underline{Worker $p$ at iteration $t_p$}}
   \STATE Wait until receiving \texttt{\em OK} from Server
   \STATE \texttt{pull} weights $w^s$ from Sever
   \STATE Replace local weights $w^{t_p}$ with $w^s$ 
   \STATE Gradient $g^{t_p} \leftarrow \frac{1}{m}\sum_{i=1}^{m} \partial l_{loss}((x_i, y_i), w^{t_p})$ \\\COMMENT{$m$: size of mini-batch $M$ and $(x_i,y_i) \in M$}
  \STATE \texttt{push} $g^{t_p}$ to Server
 \end{algorithmic} 
 
 \begin{algorithmic}[1]
 \item[]\hspace{-0.5cm}\textbf{\underline{Server at iteration $t_s$}}
 \item[-] Upon receiving \texttt{push} request with $g^{t_p}$ from worker $p$;
 \item[-] $r_p$ stores the number of extra iterations worker $p$ is allowed beyond $s_L$, initialized to zero at the very beginning;
 \item[-] $t_i$ stores the number of \texttt{push} requests received from worker $i$ so far
 \STATE $t_p=t_p+1$
 \STATE Update the server weights $w^{t_s}$ with $g^{t_p}$. If some other workers send their updates at the same time, their gradients are \textit{aggregated} before updating $w^{t_s}$
 \IF{($r_p>0$)}
   \STATE $r_p=r_p-1$
   \STATE Send \texttt{\em OK} to worker $p$ 
 \ELSE
   \STATE Find the {\em slowest} and {\em fastest} workers based on array $t$
   \IF{($t_p-t_{slowest}\le s_L$)}
     \STATE Send \texttt{\em OK} to worker $p$
   \ELSE
     \IF{$t_p$ is the {\em fastest} worker}
       \STATE $r_p \leftarrow$ synchronization\_controller ($clock_p^{push}, r_p$) \\ \COMMENT{$clock_p^{push}$: timestamp of \texttt{push} request from worker $p$}
       \IF{($r_p>0$)}
        \STATE Send \texttt{\em OK} to worker $p$
       \ENDIF
    \ENDIF
       \STATE Wait until the {\em slowest} worker sends the next \texttt{push} request(s) so that $t_p-t_{slowest}\le s_L$. After updating the server weights $w^{t_s}$ with (\textit{aggregated}) gradients, send \texttt{\em OK} to worker $p$
   \ENDIF
 \ENDIF
 \end{algorithmic} 
\end{algorithm}

\begin{algorithm}
 \caption{synchronization\_controller}
 \label{alg:sync_contrl}
 \begin{algorithmic}[1]
  \renewcommand{\algorithmicrequire}{\textbf{Input:}}
 \renewcommand{\algorithmicensure}{\textbf{Output:}}
 \REQUIRE $push_p^t$: timestamp of \texttt{push} request of worker $p$ for sending its iteration $t$'s update to the server\\
 \ENSURE $r^*$, number of extra iterations that worker $p$ is allowed to run
 \item[] \{Table $\mathcal{A}$ stores the timestamps of two latest \texttt{push} requests by all workers, where $\mathcal{A}[i][0]$ stores the timestamp of the latest \texttt{push} request by worker $i$ and $\mathcal{A}[i][1]$ stores the timestamp of the second latest \texttt{push} request by worker $i$\}
  \STATE $\mathcal{A}[p][1] \leftarrow \mathcal{A}[p][0]$
  \STATE $\mathcal{A}[p][0] \leftarrow push_p^t$ 
  \STATE Find the {\em slowest worker} from table $\mathcal{A}$
  \STATE Compute the length of the latest iteration interval of worker $p$: $\mathcal{I}_p \leftarrow \mathcal{A}[p][0] - \mathcal{A}[p][1]$
  \STATE Compute the length of the latest iteration interval of the {\em slowest worker}:  $\mathcal{I}_{slowest} \leftarrow \mathcal{A}[slowest][0] - \mathcal{A}[slowest][1]$
  \STATE  Simulate next $r_{max}$ iterations for worker $p$ based on $\mathcal{I}_p$ and $\mathcal{A}[p][0]$ by storing the $r_{max}$ simulated timestamps in array $Sim_p$ so that: \\
  $Sim_p[0] \leftarrow \mathcal{A}[p][0]$ \\
  $Sim_p[i] \leftarrow Sim_p[0] + i \times \mathcal{I}_p$ where $ 0 < i \le r_{max}$
  \label{sim_table_start}
  \item[] \COMMENT{$r_{max}$: the maximum extra iterations allowed} \label{simulate_size}
  \STATE Repeat the above step for the {\em slowest worker} and store the $r_{max}$ simulated timestamps in array $Sim_{slowest}$ with $Sim_{slowest}[0] \leftarrow \mathcal{A}[slowest][0] + \mathcal{I}_{slowest}$
  
  \STATE Find the simulated time point $r^*$ for the index of $Sim_p[r]$ that minimizes $|Sim_{slowest}[k]-Sim_p[r]|$ for all $k\in[0,r_{max}]$ and $r\in[0,r_{max}]$
  \RETURN $r^*$ 
  \end{algorithmic} 
\end{algorithm}

The proposed DSSP method is described in Algorithm \ref{alg:dssp_ps}.
The algorithm contains two parts: one for workers and the other for the server. Each worker is assigned a partition of the training data, and computes parameter updates (i.e., gradients) iteratively with the partition using the stochastic gradient descent (SGD) method. In each iteration, a mini-batch of the partition is used to compute the gradients based on the current local weights. The worker then sends the gradients to the server through a {\em push request} and waits for the server to send back the \texttt{\em OK} signal. After receiving \texttt{\em OK}, the worker pulls the weights from the server and replaces its local weights with the global weights from the server. The training at the worker continues with the next mini-batch of the data partition based on the new weights.

On the server side, once the server receives a \textit{push request} from a worker $p$, it updates its weights with the gradients from worker $p$. It then determines whether to allow worker $p$ to continue. If yes, it will send worker $p$ an \texttt{\em OK} signal; otherwise, it postpones sending the \texttt{\em OK} signal until the slowest worker catches up.

To determine whether to allow a worker $p$ to continue, the server stores \textit{the number} of \textit{push requests} received from each worker and finds the \textit{slowest worker}. If the number of push requests of worker $p$ is no more than $s_L$ iterations away from the slowest worker, the server allows worker $p$ to continue by sending \texttt{\em OK} to worker $p$ (Lines 7-9 in Algorithm 1). Otherwise, if worker $p$ is currently the \textit{fastest worker}\footnote{The reason we call the procedure only for the current fastest worker is to save the server's computation time.}, the server calls the {\em synchronization\_controller} procedure to determine whether it allows worker $p$ to continue with extra iterations. 

Algorithm 2 describes the {\em synchronization\_controller} procedure. It stores in table $\mathcal{A}$ the \textit{timestamps} of the \textit{two} latest \textit{push requests} from all workers, and uses the information in $\mathcal{A}$ to simulate the next $r_{max}$ iterations of worker $p$ and the slowest worker, where $r_{max}$ is the maximum number of extra iterations allowed for a worker to be ahead of the slowest worker beyond the lower bound of the staleness threshold. With the simulated timestamps, it finds a time point $r^*$ in the range of $[0, r_{max}]$ that minimizes the simulated waiting time of worker $p$ (Line 8 in Algorithm 2). The value $r^*$ is returned to the caller (the Server part of Algorithm 1) and stored as $r_p$ for worker $p$. For example, in Figure \ref{fig:worker_intervals}, suppose worker $n$ is the slowest worker and we are currently processing worker $1$ (i.e., $p=1$). The green boundary yields the least waiting time for worker $1$. Then worker $1$ should continue running 3 more iterations once $s_L$ is exceeded. In this case, 3 is the $r^*$ returned to the server procedure of Algorithm 1. If 0 is returned, it indicates that the current iteration yields the least waiting time, and the worker should wait for the slowest worker at the current iteration. 

In future iterations when worker $p$ sends a push request, if $r_p>0$, the server sends the \texttt{\em OK} signal right after updating the global weights with the gradients sent by the worker and decreases $r_p$ by $1$. In this way, even if worker $p$ is not the fastest worker in that iteration of the server, as long as its $r_p>0$ (due to being the fastest worker in a previous iteration), it can still perform extra iterations beyond $s_L$. Thus, our method is flexible in that different workers may have different thresholds, and also the threshold for a worker can change over time, depending on the run-time environment.


\section{Theoretical Analysis}
In this section, we prove the convergence of SGD under DSSP by showing that DSSP shares the same \textit{regret bound} $O(\sqrt{T})$ as SSP from {\cite{ho2013more}}. That is, SGD converges in expectation when the number of iterations $T$ is large under DSSP. We first present the theorem of SSP. Based on the theorem, we show that DSSP has a bound on regret. Therefore, DSSP supports SGD convergence following the same conditions as SSP. 
\begin{theorem}[adapted from {\cite{ho2013more}}. \textbf{SGD under SSP}]
\label{ssp}
\hfil \break
Suppose function $f(w) := \sum_{t=1}^{T} f_t(w)$ is a convex function and $\forall f_t(w)$ is also convex.
We use iterative convergent optimization algorithm (gradient descent) on one component $\nabla f_t$ at a time to search for the minimizer $w^*$ under SSP with the staleness threshold $s$ and $P$ workers. 
Let $v_t := -\eta_t \nabla f_t(\widetilde{w}_{t})$ where $\eta_t = \frac{\sigma}{\sqrt{t}}$ and $\sigma = \frac{F}{L\sqrt{2(s+1)P}}$. Here $\widetilde{w}_{t}$ represents the noisy state of the globally shared weight. $F$ and $L$ are constants. 
Assume $f_t$ are $L$-Lipschitz with constant $L$ and the distance $D(w\|w')$ between two multidimensional points $w$ and $w'$ is bounded such that $D(w\|w') := \frac{1}{2}\|w-w'\|^2_2 \le F^2$ where $F$ is constant. 
We have a bound on the regret
\begin{equation}{\label{regret_ssp}}
R[X] := \sum_{t=1}^{T} f_t(\widetilde{w}_{t}) - f(w^*) \le 4FL\sqrt{2(s+1)PT}
\end{equation}
Thus, $R[X] = O(\sqrt{T})$ since $\lim_{T\to\infty} \frac{R[X]}{T}=0$
\end{theorem}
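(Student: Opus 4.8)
The plan is to treat this as a standard online convex optimization regret bound, following the potential-function analysis for online gradient descent but carefully accounting for the bounded staleness of the updates. First I would fix a reference sequence of ``clean'' weights $w_t := w_0 + \sum_{t'=1}^{t} v_{t'}$, i.e.\ the hypothetical state obtained by applying all updates strictly in order, and express the noisy state actually used to compute the $t$-th gradient as
\begin{equation}
\widetilde{w}_t = w_0 + \sum_{t'=1}^{t-1} v_{t'} - \sum_{t'\in \mathcal{A}_t} v_{t'} + \sum_{t'\in \mathcal{B}_t} v_{t'},
\end{equation}
where $\mathcal{A}_t$ indexes the updates still missing from $\widetilde{w}_t$ and $\mathcal{B}_t$ indexes the premature (``extra'') updates contributed by workers that have raced ahead. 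The single fact I would extract from the SSP protocol is the staleness guarantee: every index in $\mathcal{A}_t \cup \mathcal{B}_t$ lies in a window of width at most $2(s+1)P$ centred on $t$, so in particular $t' \ge t-(s+1)P$ for all such $t'$. This is the only place the parameters $s$ and $P$ enter, and it is what turns the protocol into an analyzable perturbation of in-order gradient descent.

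Next I would invoke convexity of $f_t$. Since $\nabla f_t(\widetilde{w}_t) = -v_t/\eta_t$, the first-order convexity inequality gives
\begin{equation}
f_t(\widetilde{w}_t) - f_t(w^*) \le \langle \nabla f_t(\widetilde{w}_t),\, \widetilde{w}_t - w^*\rangle = \tfrac{1}{\eta_t}\langle v_t,\, w^* - \widetilde{w}_t\rangle.
\end{equation}
I would then split $w^*-\widetilde{w}_t = (w^*-w_{t-1}) + (w_{t-1}-\widetilde{w}_t)$. For the first piece I use the potential $D(w^*\|w_t)=\tfrac12\|w^*-w_t\|_2^2$ together with the identity $\langle v_t, w^*-w_{t-1}\rangle = D(w^*\|w_{t-1}) - D(w^*\|w_t) + \tfrac12\|v_t\|_2^2$, obtained by expanding the square with $w_t = w_{t-1}+v_t$. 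For the second piece, the decomposition above gives $w_{t-1}-\widetilde{w}_t = \sum_{t'\in\mathcal{A}_t} v_{t'} - \sum_{t'\in\mathcal{B}_t} v_{t'}$, so the staleness error surfaces as a sum of cross terms $\langle v_t, v_{t'}\rangle$.

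Summing over $t=1,\dots,T$ then leaves three sums. The telescoping sum $\sum_t \tfrac{1}{\eta_t}[D(w^*\|w_{t-1})-D(w^*\|w_t)]$ I would handle by Abel summation, using that $\eta_t=\sigma/\sqrt{t}$ is decreasing and $D(\cdot\|\cdot)\le F^2$, bounding it by $F^2/\eta_T = F^2\sqrt{T}/\sigma$. The curvature sum $\sum_t \tfrac{1}{2\eta_t}\|v_t\|_2^2$ I bound via $L$-Lipschitzness, $\|v_t\|_2 = \eta_t\|\nabla f_t\|_2 \le \eta_t L$, and $\sum_{t=1}^T t^{-1/2}\le 2\sqrt{T}$, giving $O(\sigma L^2\sqrt{T})$. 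The staleness sum $\sum_t \tfrac{1}{\eta_t}\big(\sum_{t'\in\mathcal{A}_t}-\sum_{t'\in\mathcal{B}_t}\big)\langle v_t,v_{t'}\rangle$ I bound by Cauchy--Schwarz, $|\langle v_t,v_{t'}\rangle|\le \|v_t\|_2\|v_{t'}\|_2\le \eta_t\eta_{t'}L^2$, where each window contributes at most $2(s+1)P$ terms and the window guarantee $t'\ge t-(s+1)P$ controls $\eta_{t'}$, yielding $O(\sigma L^2(s+1)P\sqrt{T})$. Collecting the three bounds gives $R[X]\le F^2\sqrt{T}/\sigma + c\,\sigma L^2(s+1)P\sqrt{T}$ for an absolute constant $c$; choosing $\sigma=F/\big(L\sqrt{2(s+1)P}\big)$ balances the two terms, and tracking constants carefully lands on $4FL\sqrt{2(s+1)PT}$. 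The claims $R[X]=O(\sqrt{T})$ and $R[X]/T\to 0$ then follow at once.

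The hard part will be the staleness sum and the counting behind it: establishing the exact window width $2(s+1)P$ and that $\mathcal{A}_t\cup\mathcal{B}_t$ sits inside it is precisely where the SSP consistency guarantee must be used, and summing $\sum_t \sum_{t'} \eta_t^{-1}\eta_t\eta_{t'}L^2$ without losing the correct dependence on $(s+1)P$ requires care in how the factors $\eta_{t'}\le \sigma/\sqrt{t-(s+1)P}$ are aggregated. The telescoping and curvature sums are routine once the step-size schedule is fixed; the entire $s$- and $P$-dependence of the final constant is driven by this one window-counting step, and getting the constant to be exactly $4$ rather than merely $O(1)$ is a matter of not being wasteful in the Cauchy--Schwarz and $\sum t^{-1/2}$ estimates.
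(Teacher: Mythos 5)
The paper itself gives no proof of this theorem: it is imported as-is (``adapted from'') the cited SSP paper of Ho et al., and the only argument the paper supplies is the one-line reduction used later for Theorem~2. Your sketch is essentially a faithful reconstruction of the proof in that cited source --- the same decomposition of $\widetilde{w}_t$ into an in-order reference sequence plus missing ($\mathcal{A}_t$) and extra ($\mathcal{B}_t$) updates confined to a window of size $O((s+1)P)$, the same three-way split into a telescoping potential term, a curvature term, and a staleness cross-term, and the same choice of $\sigma$ to balance the first and third sums --- so it is correct in approach and there is no divergence to report, only the caveat that your ``window-counting'' step must match the precise SSP consistency guarantee ($\min(\mathcal{A}_t\cup\mathcal{B}_t)\ge\max(1,\,t-(s+1)P)$ and $|\mathcal{A}_t|+|\mathcal{B}_t|\le 2(s+1)P$) to recover the constant $4$ exactly.
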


\begin{theorem}[\textbf{SGD under DSSP}] \label{dssp} 
\hfil Following all \\conditions and assumptions from Theorem \ref{ssp}, we add a new term $R=[0, s_U - s_L]$, the \textit{range} of the staleness threshold.
Let $r \in R$ and $r \ge \forall r' \in R$. 
We have a bound on the regret 
\begin{equation}
R[X] := \sum_{t=1}^{T} f_t(\widetilde{w}_{t}) - f(w^*) \le 4FL\sqrt{2(s_L+r+1)PT}
\end{equation}
Thus, $R[X] = O(\sqrt{T})$ since $\lim_{T\to\infty} \frac{R[X]}{T}=0$
\end{theorem}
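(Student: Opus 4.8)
The plan is to reduce Theorem \ref{dssp} to Theorem \ref{ssp} by showing that every execution of DSSP is also a valid execution of SSP with the \emph{worst-case} fixed threshold $s = s_L + r$, where $r = \max R = s_U - s_L$. The crucial structural observation is that the synchronization controller (Algorithm \ref{alg:sync_contrl}) only ever returns a value $r^* \in [0, r_{max}]$ with $r_{max} = r$, so a worker is permitted to advance at most $r$ iterations beyond the lower bound $s_L$. Consequently the gap between the fastest and the slowest worker is never allowed to exceed $s_L + r = s_U$ at any point in time, for any worker. In other words, DSSP satisfies the SSP bounded-staleness invariant with threshold $s_U$, even though the effective threshold varies per worker and over time.

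First I would recall the mechanics of the SSP proof underlying Theorem \ref{ssp}. There the factor $(s+1)$ inside the regret bound enters only through an upper bound on the noise in the shared state $\widetilde{w}_t$: the difference between $\widetilde{w}_t$ and the idealized sequentially-updated weight is controlled by the number of updates that may be missing from, or prematurely present in, $\widetilde{w}_t$, and this count is bounded using the staleness threshold as an \emph{upper} bound on how far any worker's clock may drift. I would make explicit that this step of the argument never uses that the staleness equals a constant $s$; it uses only that the drift is at most $s$.

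Then the reduction is immediate. Since under DSSP the drift of every worker is at most $s_L + r$, I would substitute $s \mapsto s_L + r$ into the noise bound of Theorem \ref{ssp}. Choosing the step size $\eta_t = \frac{\sigma}{\sqrt{t}}$ with $\sigma = \frac{F}{L\sqrt{2(s_L+r+1)P}}$ --- i.e. tuned to the worst-case staleness --- and carrying the SSP calculation through verbatim yields
\begin{equation}
R[X] = \sum_{t=1}^{T} f_t(\widetilde{w}_{t}) - f(w^*) \le 4FL\sqrt{2(s_L+r+1)PT},
\end{equation}
and the $O(\sqrt{T})$ conclusion follows because $\lim_{T\to\infty} R[X]/T = 0$, exactly as before.

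The main obstacle I anticipate is justifying that the per-worker, time-varying thresholds of DSSP genuinely keep the \emph{global} staleness bounded by $s_U$, rather than merely bounding each worker's own extra budget $r_p$ in isolation. I would need to argue that even when several workers simultaneously hold positive budgets $r_p$ and the server services their push requests out of order, the invariant $t_{fastest} - t_{slowest} \le s_U$ is preserved, since it is this global invariant --- not the individual budgets --- that the SSP noise bound consumes. Once this invariant is established, the monotonicity of the SSP bound in $s$ makes the worst-case substitution $s = s_U = s_L + r$ sound, and the remainder is a direct appeal to Theorem \ref{ssp}.
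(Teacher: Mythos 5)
Your proposal takes essentially the same route as the paper: both reduce DSSP to SSP run with the worst-case fixed threshold $s' = s_L + r = s_U$ and then invoke the bound of Theorem \ref{ssp} with $s$ replaced by $s_L + r$. You are in fact more careful than the paper's one-line argument, which asserts the reduction without verifying the global staleness invariant that you correctly identify as the crux.
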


\begin{proof}
Since we follow all conditions and assumptions from Theorem \ref{ssp}, we need to show the newly added range $R$ does not change the regret bound of SSP. In DSSP, the threshold is dynamically changing between $s_L$ and $s_U$ where $r \in R$ and $R = [0, s_U - s_L]$. We know that SSP with threshold $s_L$ has a bound on regret according to (\ref{regret_ssp}). We only extend the threshold $s_L$ of Theorem \ref{ssp} to $s_L + r$ where $r$ is the largest number from $R$. Suppose we set a fixed threshold $s'$ for SSP, then our DSSP can be deducted to SSP when we set $s' = s_L + r$. Thus, we have a upper bound on regret of SSP with threshold $s'$.
\end{proof}

\section{Experiments}
We evaluate the performance of DSSP compared to other three  distributed paradigms. We aim to find whether DSSP converges faster than SSP on average and whether it can maintain the predictive accuracy of SSP. 

\subsection{Experiment setup}
\subsubsection{Hardware}
We conducted experiments on the SOSCIP GPU cluster{\cite{soscip}} with up to four IBM POWER8 servers running Ubuntu 16.04. Each server has four NVIDIA P100 GPUs. Each server has 512 GB ram and 2$\times$10 cores. The servers are connected with Infiniband EDR. Each server connects directly to a switch with dedicated 100 Gbps bandwidth. 

We also set up a virtual cluster with a mixed GPU models by creating two Docker containers running Ubuntu 16.04 on a server with NVIDIA GTX1060 and GTX1080 Ti. The server has 64 GB ram and 8 cores. Each container is assigned with a dedicated GPU.
\subsubsection{Dataset}
We used CIFAR-10 and CIFAR-100 datasets{\cite{krizhevsky2009learning}} 
for image classification tasks. Both datatsets have 50,000 training images and 10,000 test images. CIFAR-10 has 10 classes, while CIFAR-100 has 100 classes. 
\subsubsection{Models}
We used a downsized AlexNet{\cite{krizhevsky2012imagenet}}, 
ResNet-50 and ResNet-110{\cite{he2016deep}} as our deep neural network structure to evaluate the four distributed paradigms. We reduced the original AlexNet structure to a network with 3 convolutional layers and 2 fully connected layers
to achieve faster convergence within 24 hours (which is the time limit we are allowed to run for each job on the SOSCIP cluster). We set the staleness threshold $s_L=3$ and the range $R = [0,12]$ for DSSP which is equivalent to the corresponding threshold range $[3,15]$ for SSP.


\subsubsection{Implementation}
We ran each paradigm on 4 servers. Each server represents a worker which has 4 GPUs. Each GPU loads a copy of the DNN model. Thus, there are 16 replica models for 4 workers. Each worker collects the computed gradients from 4 GPUs by the end of every iteration and sends the sum of the gradients to the parameter server. One of the 4 servers is also elected to run the parameter server when the training starts from the very beginning as defined in MXNet. 
We ran each experiment three times and chose the medium result based on the test accuracy. 

\subsection{Results and discussion}
We used batch size 128, learning rate 0.001 in 300 epochs to train the downsized AlexNet on CIFAR-10. Figure \ref{fig:alex_test} shows that DSSP, SSP and ASP converge much faster than BSP, and that DSSP and SSP converge to a higher accuracy than ASP. BSP is the slowest to complete the 300 epochs. The performance of DSSP and averaged SSP are similar, with DSSP converging a little bit faster to a bit higher accuracy. 
DSSP and averaged SSP complete 300 epochs almost at the same time. 
Note that this result is expected because the result of averaged SSP is the average over the results from 13 different threshold values from 3 to 15, and when its threshold is large, it is very fast, much faster than DSSP with a threshold range of [3,15]. However, a larger threshold of SSP incurs more staler gradients, which implies more noises and decreases the quality of iterations{\cite{ho2013more}}. Theoretically, as the threshold $s$ of SSP increases, the rate of convergence decreases per iteration update{\cite{ho2013more}}.
Figure \ref{fig:alex_ssp} compares DSSP with individual SSPs with different threshold values. It shows that DSSP converges a bit faster to a higher accuracy than almost all of the SSPs except for one.  

For ResNet-50 and ResNet-110 training on CIFAR-100, we used batch size 128, learning rate 0.05 and decay 0.1 twice at epoch 200 and 250 in 300 epochs for both. In Figure \ref{fig:resnet50_test}, DSSP has the same convergence rate as ASP and SSP, and they converges much faster than BSP although BSP completes 300 epochs faster than others on both ResNets. 
Again, DSSP converges a little faster and archives a bit higher accuracy than averaged SSP in Figure {\ref{fig:Resnet110_test}}.


Four distributed paradigms on both ResNets behave in an opposite way compared to the downsized AlexNet which has fully connected layers in terms of the time taken to complete 300 epochs. The order from fastest to slowest is BSP, SSP, DSSP and ASP.

Based on the empirical results, we observe two opposite trends of ASP, DSSP, SSP and BSP with respect to their performance. The trends can be classified by the architecture of DNNs: ones that contain fully connected layers and ones that do not. Note that we do not count the final fully connected softmax layer as the fully connected layer over the discussion.

\subsubsection{DNNs with fully connected layers (AlexNet)} DSSP converges to a higher test accuracy faster than ASP, BSP and average SSPs in its corresponding staleness threshold range. ASP has the largest iteration throughput and its convergence rate is close to DSSP but it usually converges to a very low accuracy (the lowest of four paradigms) and diverges sometimes (see Figure \ref{fig:alex_test}). DSSP performs between SSP and BSP in terms of final test accuracy. In this category, our DSSP converges faster than SSP and ASP to a higher accuracy. 
We know BSP guarantees the convergence and its accuracy is the same as using a single machine. Thus, it has no consistency errors caused by delayed updates. Given abundant time of training, BSP can reach the highest accuracy among all distributed paradigms. We do not discuss BSP in detail here since our focus is to show the benefits that our DSSP brings compared to SSP and ASP, both cost less training time than BSP.

\subsubsection{DNNs without fully connected layers (ResNet-50, ResNet-110)} DSSP converges faster than average SSP in its corresponding range on very deep neural networks. ASP appears to be a strong rival but it has no guarantee to converge as addressed in section {\ref{asp_converage}}. BSP delivers the highest iteration throughput. However, it converges slower and to a lower accuracy than other three paradigms mostly (see Figure \ref{fig:resnet50_test}, \ref{fig:Resnet110_test}). 
The iteration throughputs 
of ASP, DSSP, SSP and BSP are in ascending order. 
In this category, the convergence rate of DSSP, ASP and SSP are very close. DSSP performs slightly above the average SSPs where the threshold $s$ starts from 3 to 15 (see Figure \ref{fig:Resnet50_ssp}, \ref{fig:resnet110_ssp}).

\begin{figure*}[ht!]

\begin{subfigure}{.45\textwidth}
  \centering
  \includegraphics[width=.93\linewidth]{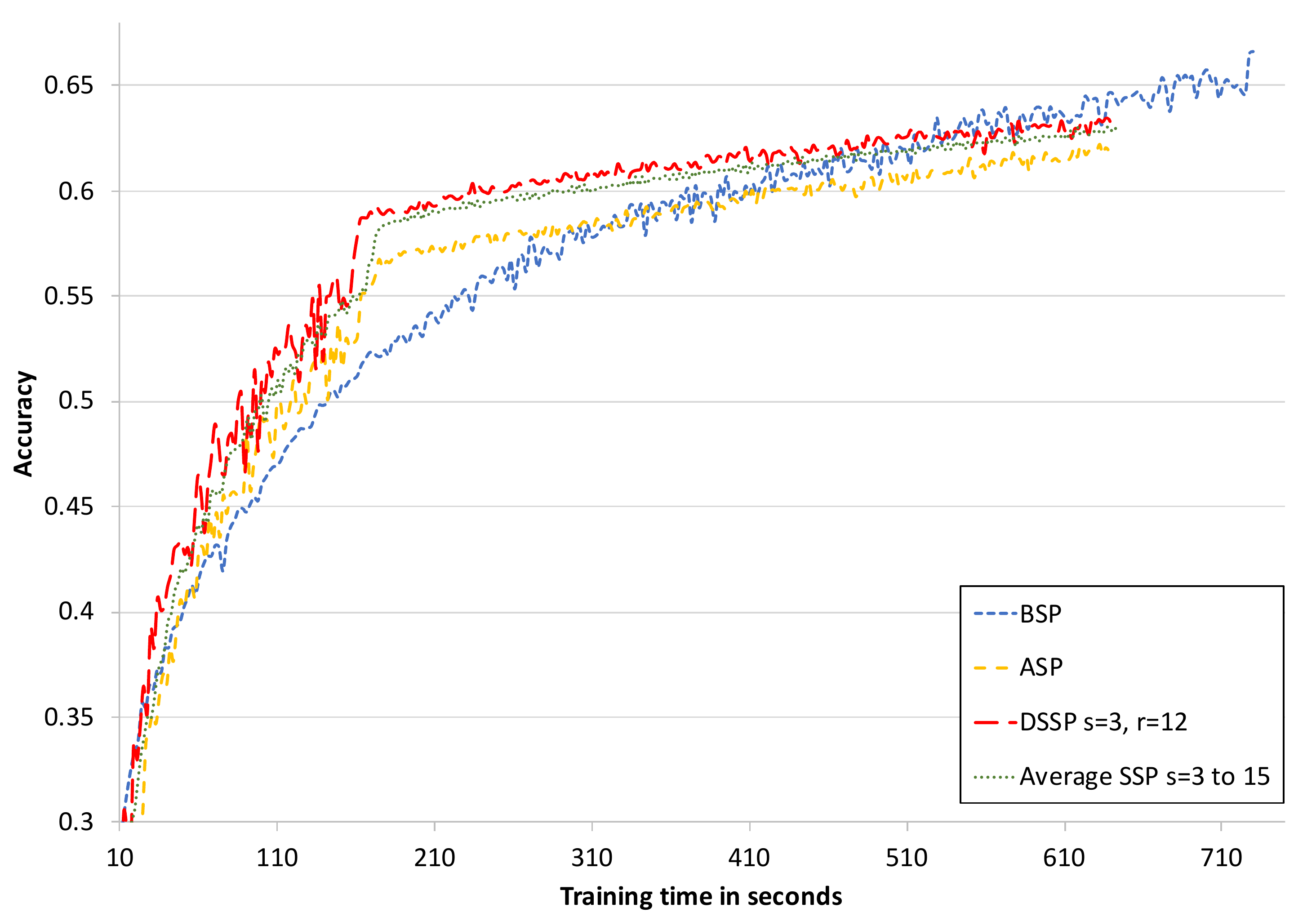}
  \caption{All paradigms run on downsized AlexNet}
  \label{fig:alex_test}
\end{subfigure}
\begin{subfigure}{.45\textwidth}
  \centering
  \includegraphics[width=.93\linewidth]{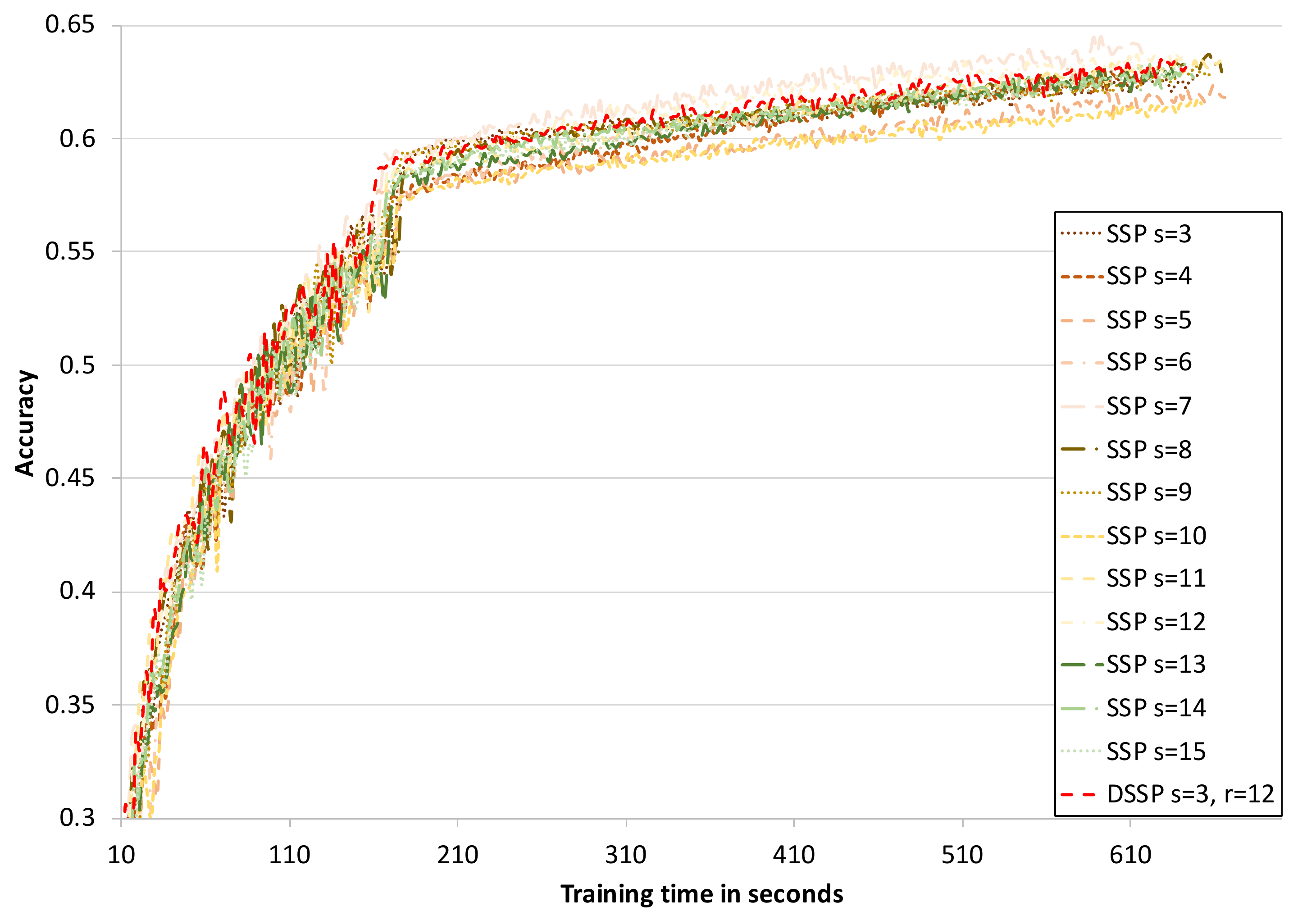}
  \caption{DSSP and SSPs run on downsized AlexNet}
  \label{fig:alex_ssp}
\end{subfigure}\hspace*{\fill}
\medskip
\vspace{0.005in}

\begin{subfigure}{.45\textwidth}
  \centering
  \includegraphics[width=.93\linewidth]{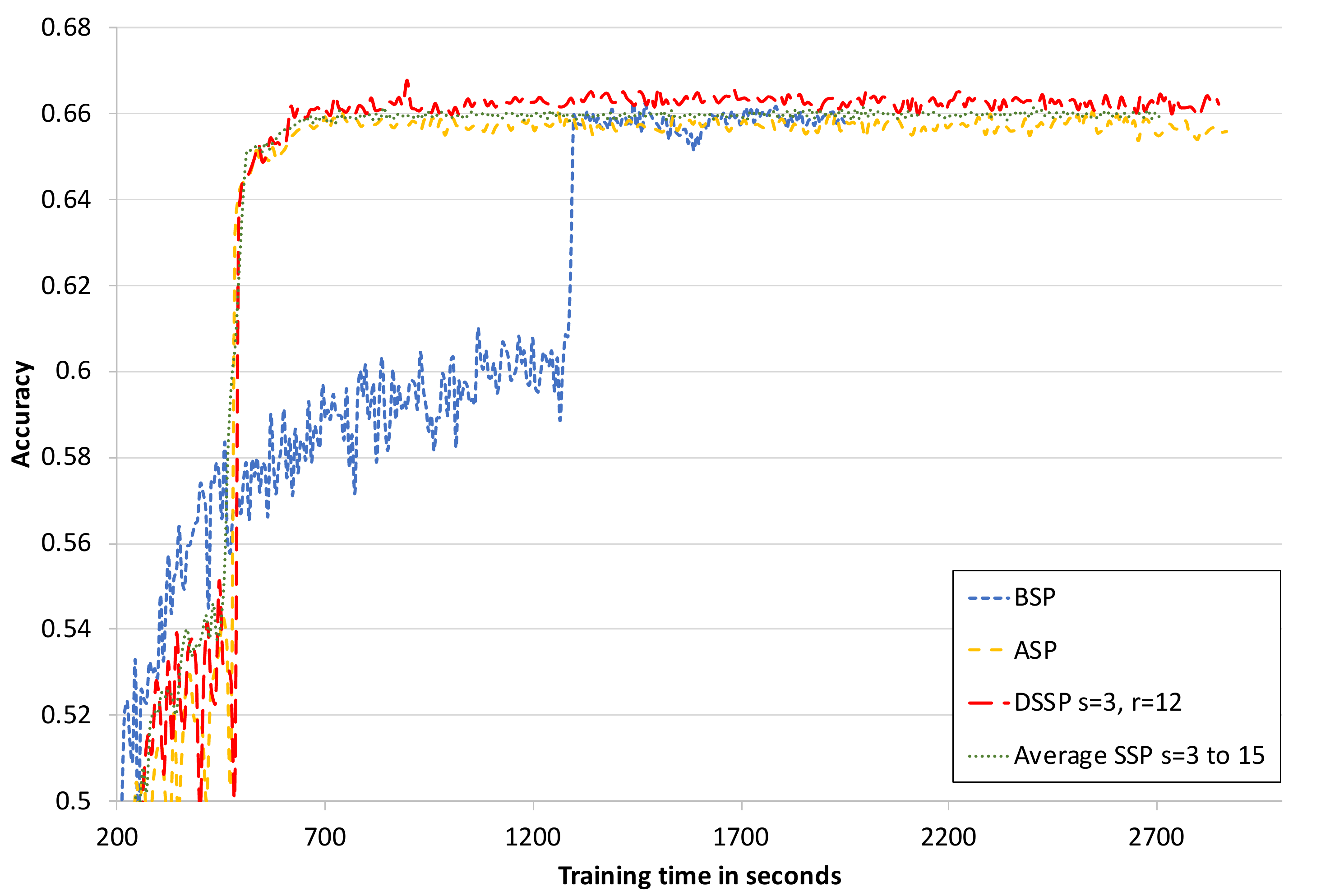}
  \caption{All paradigms run on ResNet-50}
  \label{fig:resnet50_test}
\end{subfigure}
\begin{subfigure}{.45\textwidth}
  \centering
  \includegraphics[width=.93\linewidth]{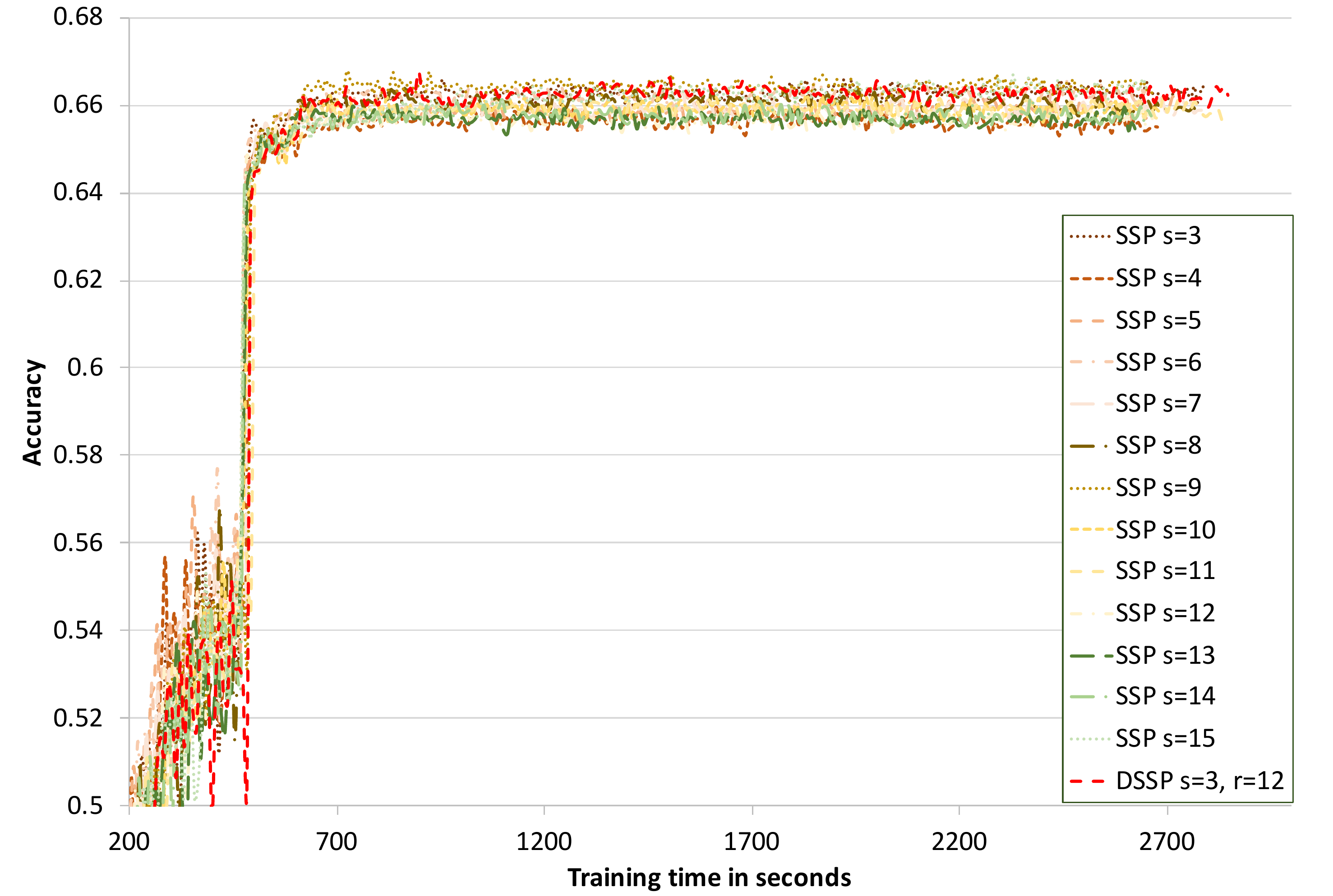}
  \caption{DSSP and SSPs run on ResNet-50}
  \label{fig:Resnet50_ssp}
\end{subfigure}\hspace*{\fill}
\medskip
\vspace{0.005in}

\begin{subfigure}{.45\textwidth}
  \centering
  \includegraphics[width=.93\linewidth]{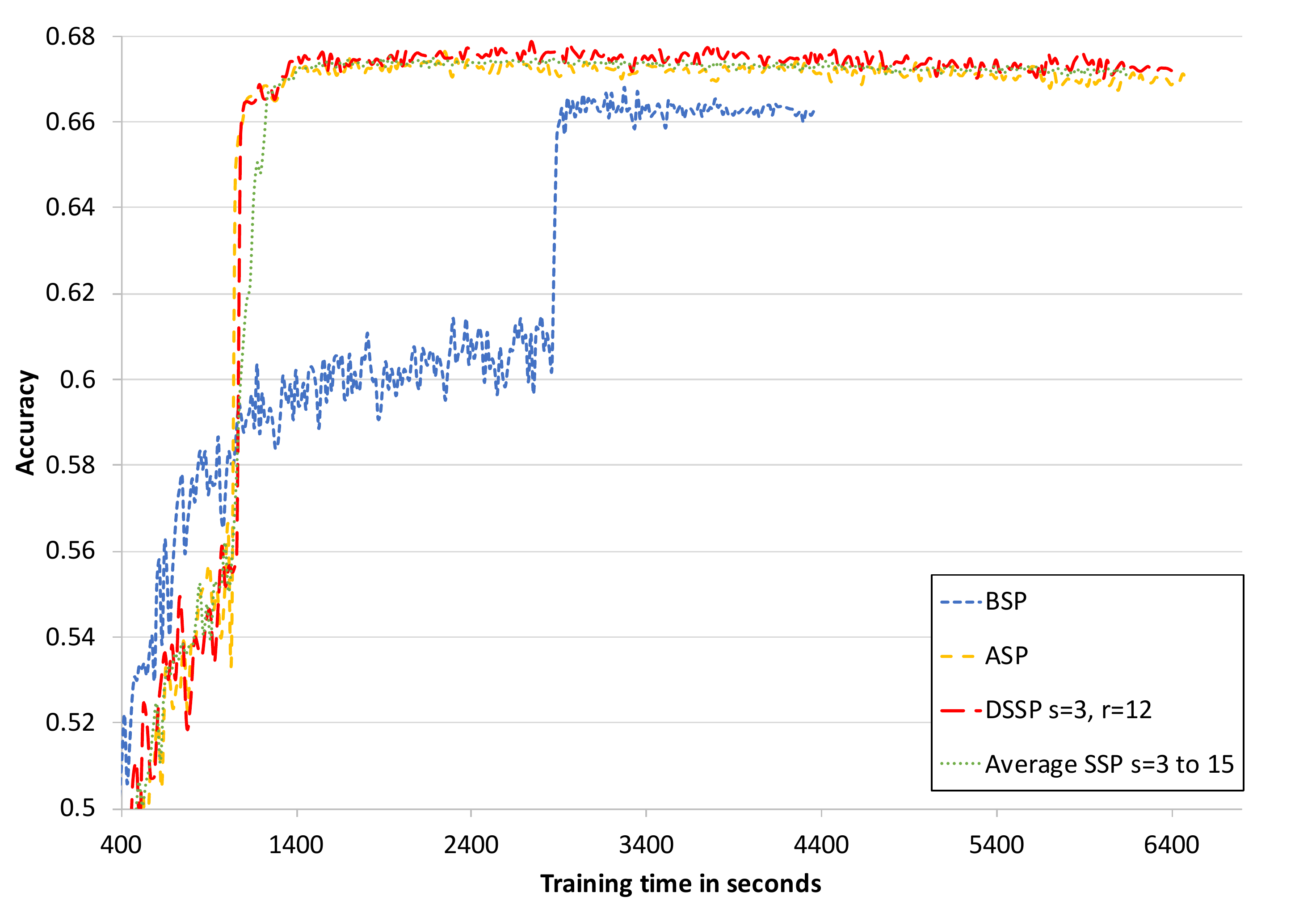}
  \caption{All paradigms run on ResNet-110}
  \label{fig:Resnet110_test}
\end{subfigure}
\begin{subfigure}{.45\textwidth}
  \centering
  \includegraphics[width=.93\linewidth]{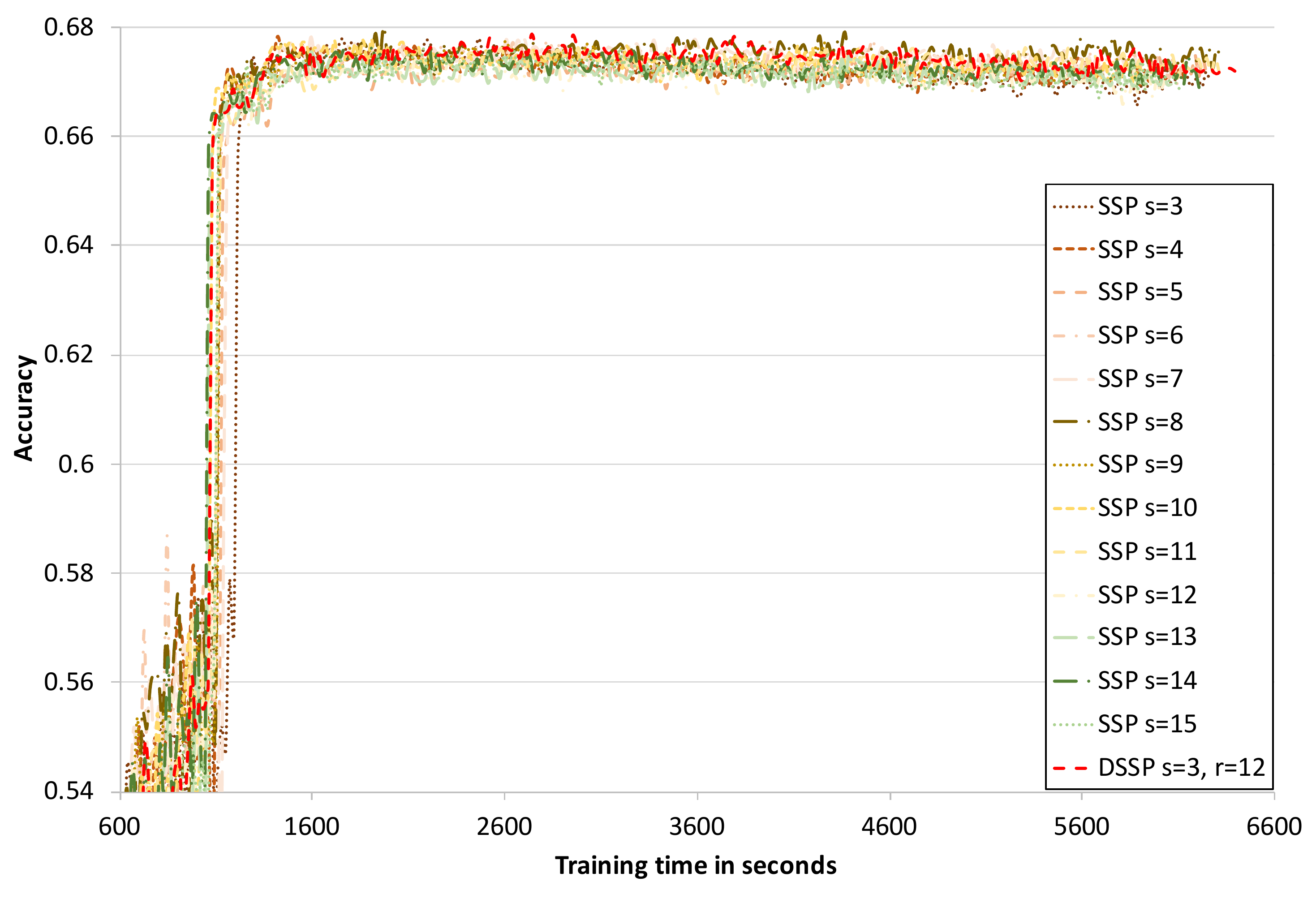}
  \caption{DSSP and SSPs run on ResNet-110}
  \label{fig:resnet110_ssp}
\end{subfigure}\hspace*{\fill}
\caption{Distributed paradigms comparison on downsized AlexNet, ResNet-50 and ResNet-110 training for 300 epochs. Downsized AlexNet is trained on CIFAR-10 and both ResNets are trained on CIFAR-100. Average SSP on the right column is derived by averaging SSPs with threshold from 3 to 15 on the left column. Faster convergence to a targeted high accuracy indicates less training time is required for the paradigm.}
\label{fig:results}
\end{figure*}

\subsection{Demystify the difference}
Below we answer two questions:
 (1) Why does the iteration throughput have the opposite trends for ASP, DSSP, SSP and BSP on DNNs with and without fully connected layers?
 (2) Why do pure convolutional neural networks (CNNs) receive a higher accuracy from DSSP, SSP and ASP than BSP?
We temporarily name DNNs with fully connected layers as DNNs and pure CNNs as CNNs for the convenience of discussion.

To answer the first question, we observe the difference between the two types of DNNs (with or without fully connected layers): \ding{172} A fully connected layer requires more parameters than a convolutional layer which uses shared parameters{\cite{tompson2015efficient}}. DNNs with fully connected layers have a large number of model parameters that need to be transmitted between workers and the server for updates. \ding{173} Convolutional layers require intensive computing time for matrix dot product operations while computing for fully connected layers involves simple linear algebra operations {\cite{krizhevsky2014one}}. CNNs that only use convolutional layers take a lot of computing time, while their relatively smaller-size model parameters cost less data transmission time between workers and the server than DNNs. Moreover, when the ratio of computing time and communication time per iteration is small, less time can be saved per iteration for workers since computing time per iteration (or one mini-batch) is fixed for a model per worker. To the contrary, when the ratio is large, the communication time per iteration for each worker can be shifted by asynchronous-like parallel schemes and more time can be saved. 
Therefore, DSSP, SSP and ASP take less training time on DNNs whereas BSP costs the least training time on CNNs. 


To the second question, the answer lies in the difference between fully connected layers and convolutional layers.
Fully connected layers are easy to overfit the data set due to its large number of parameters{\cite{srivastava2014dropout}}. Thus, any error introduced by staled updates can cause many parameters diverge in non-uniform convergence {\cite{xing2016strategies}}. Informally, fully connected layers overfit to the errors injected by delayed updates or noise.
Convolutional layers have less parameters due to the use of filters (shared parameters). For image classification tasks, a commonly used trick to train CNNs on a small data set is to increase the data by distorting the existing images and saving them {\cite{perez2017effectiveness}} since CNNs are able to tolerate certain scale variations{\cite{xu2014scale}}. Distortion can be done by rotating the image, setting one or two of RGB pixels to zero or adding Gaussian noise to the image {\cite{kang2015simultaneous}}. It is basically to introduce noise to images so that CNN models receive enhanced training and the predictions are improved. 
The errors caused by (not too) staled updates can give the same effect to the training model as the distortion. Figures \ref{fig:resnet50_test}, \ref{fig:Resnet110_test} are good evidence to support that. Furthermore,  {\cite{neelakantan2015adding}} empirically shows that adding gradient noises improves the accuracy for training very deep neural networks which also happened in our ResNet-110 experiments (in Figure \ref{fig:Resnet110_test}).

\subsection{Cluster with mixed GPU models}
The results of DSSP and SSPs on ResNet-110 (see Figure{\ref{fig:Resnet110_test}}) do not show a significant difference on convergence rate on a homogeneous environment where GPUs are identical. Nonetheless, on the heterogeneous environment where we have one GTX1060 and one GTX1080 Ti running on each worker, DSSP converges faster and to a higher accuracy than SSP. We repeated the exact same experiments on ResNet-110 as earlier: use the same hyperparameters setting, run 3 trials on each paradigm and choose the medium one based on the test accuracy. Figure {\ref{fig:resnet110_mixed_gpus}} and Table {\ref{table_mixed_gpus}} show that DSSP can reach a higher accuracy significantly faster than SSP. 
The heterogeneous environment is very common in industry since new GPU models come to the market every year while the old models are still in use. ASP can fully utilize the individual GPU and achieves the largest iteration throughout. However, ASP also introduces the most staled updates among all distributed paradigms. It may converge to a lower accuracy than DSSP when the GPU models' processing capacities are significantly different since the iterations between the fastest worker and the slowest worker are dramatically different. In contrast, DSSP has consistent performance regardless the running environment since it adapts to the environment by adjusting the threshold dynamically for every iteration of workers.

\begin{figure}[!ht]
    \centering
    \includegraphics[width=0.483\textwidth]{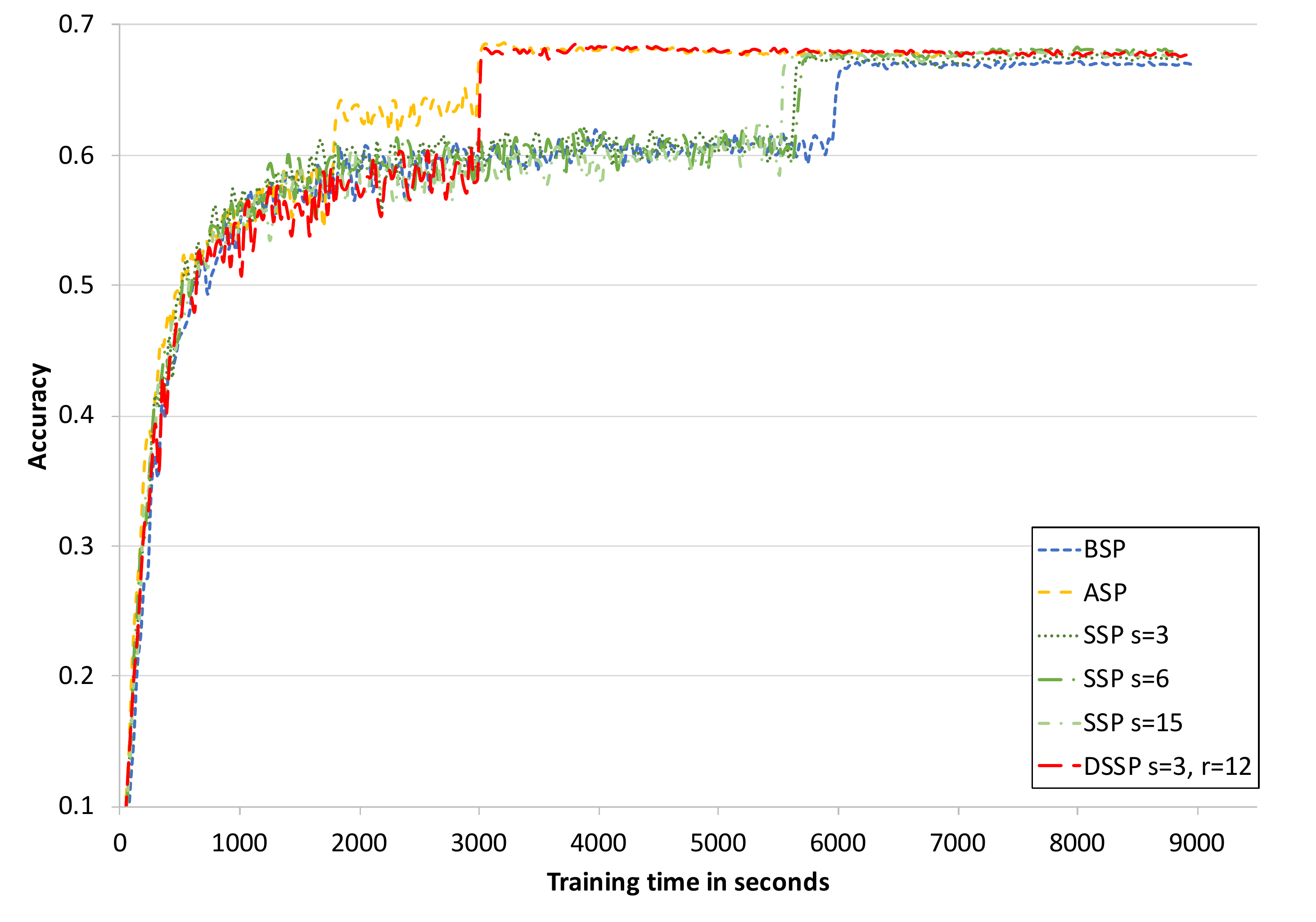}
    \caption{Trained ResNet-110 on CIFAR-100 with two workers on a mixed GPU cluster for 300 epochs. GTX1060 and GTX1080 Ti are assigned to individual worker. Our DSSP converges faster and achieves higher accuracy than SSP.}
    \label{fig:resnet110_mixed_gpus}
\end{figure}

\begin{table}[!t]
\renewcommand{\arraystretch}{1.5}
\centering
\begin{tabular}{ c|c|c }
Distributed & Time to reach & Time to reach \\ 
Paradigm & 0.67 accuracy & 0.68 accuracy\\
\hline
BSP & 6159.2 & $-$ \\ 
ASP & 2993.1 & 3017.2 \\
SSP $s$=3 & 5678.2 & $-$ \\
SSP $s$=6 & 5703.8 & 6908.2 \\
SSP $s$=15 & 5564.9 & 7255.6 \\
DSSP $s_L$=3, $r$=12 & 3016.4 & 3046.3 \\
\end{tabular}
\caption{Time in seconds to reach the targeted test accuracy in training. The maximum test accuracy of BSP and SSP with $s$=3 is 0.67. Trained ResNet-110 on CIFAR-100 with two workers for 300 epochs. Each worker has either GTX1080 Ti or GTX1060.}
\label{table_mixed_gpus}
\end{table}

\section{Conclusion}
We presented dynamic staleness synchronous parallel (DSSP) paradigm for distributed training using the parameter server framework. DSSP improves SSP in the sense that with DSSP a user does not need to provide a specific staleness threshold which is hard to determine in practice, and also that DSSP can dynamically determine the value for the threshold from a range using a lightweight method according to the run-time environment. This does not only alleviate the burden of an exact manual staleness threshold selection or multiple trials of hyperparameter selection, but it also provides flexibility of selecting different thresholds for different workers at different times. We provided theoretical analysis on the expected convergence of DSSP which inherits the same regret bound of SSP to show that DSSP converges in theory as long as the range is constant. We evaluated DSSP by training three DNNs on two datasets and compared its results with other distributed paradigms.
For DNNs without fully connected layers, DSSP achieves higher accuracy than BSP and slightly better accuracy than averaged SSP. For DNNs with fully connected layers, DSSP generally converges faster than BSP, ASP and averaged SSP to a higher accuracy even though BSP can eventually reach the highest accuracy if it is given more training time. Unlike ASP, DSSP ensures the convergence of DNNs by limiting the staled delays. DSSP gives significant improvement than SSP and BSP in a heterogeneous environment with mixed models of GPUs, converging much faster to a higher accuracy.
DSSP also shows more stable performance on either homogeneous or heterogeneous environment compared to other three distributed paradigms. 
Furthermore, we discussed the difference in the trends of four distributed paradigms on DNNs with and without fully connected layers and the potential causes. For the further work, we will investigate how DSSP can adapt to an unstable environment where network connections are fluctuating between the servers.

\section*{Acknowledgment}

This work is funded by the Natural Sciences and Engineering Research Council of Canada (NSERC), IBM Canada and the Big Data Research Analytics and Information Network (BRAIN) Alliance established by Ontario Research Fund - Research Excellence Program (ORF-RE).
The experiments reported in this paper were performed on the GPU cluster of SOSCIP. SOSCIP is funded by the Federal Economic Development Agency of Southern Ontario, the Province of Ontario, IBM Canada, Ontario Centres of Excellence, Mitacs and 15 Ontario academic member institutions.



\bibliographystyle{IEEEtran}
\bibliography{bare_conf}

\begin{thebibliography}{10}
\providecommand{\url}[1]{#1}
\csname url@samestyle\endcsname
\providecommand{\newblock}{\relax}
\providecommand{\bibinfo}[2]{#2}
\providecommand{\BIBentrySTDinterwordspacing}{\spaceskip=0pt\relax}
\providecommand{\BIBentryALTinterwordstretchfactor}{4}
\providecommand{\BIBentryALTinterwordspacing}{\spaceskip=\fontdimen2\font plus
\BIBentryALTinterwordstretchfactor\fontdimen3\font minus
  \fontdimen4\font\relax}
\providecommand{\BIBforeignlanguage}[2]{{%
\expandafter\ifx\csname l@#1\endcsname\relax
\typeout{** WARNING: IEEEtran.bst: No hyphenation pattern has been}%
\typeout{** loaded for the language `#1'. Using the pattern for}%
\typeout{** the default language instead.}%
\else
\language=\csname l@#1\endcsname
\fi
#2}}
\providecommand{\BIBdecl}{\relax}
\BIBdecl

\bibitem{dean2012large}
J.~Dean, G.~Corrado, R.~Monga, K.~Chen, M.~Devin, M.~Mao, A.~Senior, P.~Tucker,
  K.~Yang, Q.~V. Le \emph{et~al.}, ``Large scale distributed deep networks,''
  in \emph{Advances in neural information processing systems}, 2012, pp.
  1223--1231.

\bibitem{chen2015mxnet}
T.~Chen, M.~Li, Y.~Li, M.~Lin, N.~Wang, M.~Wang, T.~Xiao, B.~Xu, C.~Zhang, and
  Z.~Zhang, ``Mxnet: A flexible and efficient machine learning library for
  heterogeneous distributed systems,'' \emph{arXiv preprint arXiv:1512.01274},
  2015.

\bibitem{krizhevsky2012imagenet}
A.~Krizhevsky, I.~Sutskever, and G.~E. Hinton, ``Imagenet classification with
  deep convolutional neural networks,'' in \emph{Advances in neural information
  processing systems}, 2012, pp. 1097--1105.

\bibitem{Szegedy_2016_CVPR}
C.~Szegedy, V.~Vanhoucke, S.~Ioffe, J.~Shlens, and Z.~Wojna, ``Rethinking the
  inception architecture for computer vision,'' in \emph{The IEEE Conference on
  Computer Vision and Pattern Recognition (CVPR)}, June 2016.

\bibitem{simonyan2014very}
K.~Simonyan and A.~Zisserman, ``Very deep convolutional networks for
  large-scale image recognition,'' \emph{arXiv preprint arXiv:1409.1556}, 2014.

\bibitem{lin2014microsoft}
T.-Y. Lin, M.~Maire, S.~Belongie, J.~Hays, P.~Perona, D.~Ramanan,
  P.~Doll{\'a}r, and C.~L. Zitnick, ``Microsoft coco: Common objects in
  context,'' in \emph{European conference on computer vision}.\hskip 1em plus
  0.5em minus 0.4em\relax Springer, 2014, pp. 740--755.

\bibitem{chilimbi2014project}
T.~M. Chilimbi, Y.~Suzue, J.~Apacible, and K.~Kalyanaraman, ``Project adam:
  Building an efficient and scalable deep learning training system.'' in
  \emph{OSDI}, vol.~14, 2014, pp. 571--582.

\bibitem{li2014scaling}
M.~Li, D.~G. Andersen, J.~W. Park, A.~J. Smola, A.~Ahmed, V.~Josifovski,
  J.~Long, E.~J. Shekita, and B.-Y. Su, ``Scaling distributed machine learning
  with the parameter server,'' in \emph{11th $\{$USENIX$\}$ Symposium on
  Operating Systems Design and Implementation ($\{$OSDI$\}$ 14)}, 2014, pp.
  583--598.

\bibitem{xing2015petuum}
E.~P. Xing, Q.~Ho, W.~Dai, J.~K. Kim, J.~Wei, S.~Lee, X.~Zheng, P.~Xie,
  A.~Kumar, and Y.~Yu, ``Petuum: A new platform for distributed machine
  learning on big data,'' \emph{IEEE Transactions on Big Data}, vol.~1, no.~2,
  pp. 49--67, 2015.

\bibitem{landset2015survey}
S.~Landset, T.~M. Khoshgoftaar, A.~N. Richter, and T.~Hasanin, ``A survey of
  open source tools for machine learning with big data in the hadoop
  ecosystem,'' \emph{Journal of Big Data}, vol.~2, no.~1, p.~24, 2015.

\bibitem{meng2016mllib}
X.~Meng, J.~Bradley, B.~Yavuz, E.~Sparks, S.~Venkataraman, D.~Liu, J.~Freeman,
  D.~Tsai, M.~Amde, S.~Owen \emph{et~al.}, ``Mllib: Machine learning in apache
  spark,'' \emph{The Journal of Machine Learning Research}, vol.~17, no.~1, pp.
  1235--1241, 2016.

\bibitem{mao2014deep}
J.~Mao, W.~Xu, Y.~Yang, J.~Wang, Z.~Huang, and A.~Yuille, ``Deep captioning
  with multimodal recurrent neural networks (m-rnn),'' \emph{arXiv preprint
  arXiv:1412.6632}, 2014.

\bibitem{zhou2017kunpeng}
J.~Zhou, X.~Li, P.~Zhao, C.~Chen, L.~Li, X.~Yang, Q.~Cui, J.~Yu, X.~Chen,
  Y.~Ding \emph{et~al.}, ``Kunpeng: Parameter server based distributed learning
  systems and its applications in alibaba and ant financial,'' in
  \emph{Proceedings of the 23rd ACM SIGKDD International Conference on
  Knowledge Discovery and Data Mining}.\hskip 1em plus 0.5em minus 0.4em\relax
  ACM, 2017, pp. 1693--1702.

\bibitem{lee2014model}
S.~Lee, J.~K. Kim, X.~Zheng, Q.~Ho, G.~A. Gibson, and E.~P. Xing, ``On model
  parallelization and scheduling strategies for distributed machine learning,''
  in \emph{Advances in neural information processing systems}, 2014, pp.
  2834--2842.

\bibitem{zhou2016convergence}
Y.~Zhou, Y.~Yu, W.~Dai, Y.~Liang, and E.~Xing, ``On convergence of model
  parallel proximal gradient algorithm for stale synchronous parallel system,''
  in \emph{Artificial Intelligence and Statistics}, 2016, pp. 713--722.

\bibitem{xing2016strategies}
E.~P. Xing, Q.~Ho, P.~Xie, and D.~Wei, ``Strategies and principles of
  distributed machine learning on big data,'' \emph{Engineering}, vol.~2,
  no.~2, pp. 179--195, 2016.

\bibitem{gerbessiotis1994direct}
A.~V. Gerbessiotis and L.~G. Valiant, ``Direct bulk-synchronous parallel
  algorithms,'' \emph{Journal of parallel and distributed computing}, vol.~22,
  no.~2, pp. 251--267, 1994.

\bibitem{ho2013more}
Q.~Ho, J.~Cipar, H.~Cui, S.~Lee, J.~K. Kim, P.~B. Gibbons, G.~A. Gibson,
  G.~Ganger, and E.~P. Xing, ``More effective distributed ml via a stale
  synchronous parallel parameter server,'' in \emph{Advances in neural
  information processing systems}, 2013, pp. 1223--1231.

\bibitem{cui2014exploiting}
H.~Cui, J.~Cipar, Q.~Ho, J.~K. Kim, S.~Lee, A.~Kumar, J.~Wei, W.~Dai, G.~R.
  Ganger, P.~B. Gibbons \emph{et~al.}, ``Exploiting bounded staleness to speed
  up big data analytics.'' in \emph{USENIX Annual Technical Conference}, 2014,
  pp. 37--48.

\bibitem{recht2011hogwild}
B.~Recht, C.~Re, S.~Wright, and F.~Niu, ``Hogwild: A lock-free approach to
  parallelizing stochastic gradient descent,'' in \emph{Advances in neural
  information processing systems}, 2011, pp. 693--701.

\bibitem{zinkevich2010parallelized}
M.~Zinkevich, M.~Weimer, L.~Li, and A.~J. Smola, ``Parallelized stochastic
  gradient descent,'' in \emph{Advances in neural information processing
  systems}, 2010, pp. 2595--2603.

\bibitem{wang2017accelerating}
L.~Wang, Y.~Yang, R.~Min, and S.~Chakradhar, ``Accelerating deep neural network
  training with inconsistent stochastic gradient descent,'' \emph{Neural
  Networks}, vol.~93, pp. 219--229, 2017.

\bibitem{dai2015high}
W.~Dai, A.~Kumar, J.~Wei, Q.~Ho, G.~Gibson, and E.~P. Xing, ``High-performance
  distributed ml at scale through parameter server consistency models,'' in
  \emph{Twenty-Ninth AAAI Conference on Artificial Intelligence}, 2015.

\bibitem{wei2015managed}
J.~Wei, W.~Dai, A.~Qiao, Q.~Ho, H.~Cui, G.~R. Ganger, P.~B. Gibbons, G.~A.
  Gibson, and E.~P. Xing, ``Managed communication and consistency for fast
  data-parallel iterative analytics,'' in \emph{Proceedings of the Sixth ACM
  Symposium on Cloud Computing}.\hskip 1em plus 0.5em minus 0.4em\relax ACM,
  2015, pp. 381--394.

\bibitem{li2014communication}
M.~Li, D.~G. Andersen, A.~J. Smola, and K.~Yu, ``Communication efficient
  distributed machine learning with the parameter server,'' in \emph{Advances
  in Neural Information Processing Systems}, 2014, pp. 19--27.

\bibitem{chen2016revisiting}
J.~Chen, X.~Pan, R.~Monga, S.~Bengio, and R.~Jozefowicz, ``Revisiting
  distributed synchronous sgd,'' \emph{arXiv preprint arXiv:1604.00981}, 2016.

\bibitem{hadjis2016omnivore}
S.~Hadjis, C.~Zhang, I.~Mitliagkas, D.~Iter, and C.~R{\'e}, ``Omnivore: An
  optimizer for multi-device deep learning on cpus and gpus,'' \emph{arXiv
  preprint arXiv:1606.04487}, 2016.

\bibitem{zhang2014asynchronous}
R.~Zhang and J.~Kwok, ``Asynchronous distributed admm for consensus
  optimization,'' in \emph{International Conference on Machine Learning}, 2014,
  pp. 1701--1709.

\bibitem{soscip}
\BIBentryALTinterwordspacing
``{SOSCIP GPU},'' accessed 2018-08-01. [Online]. Available:
  \url{https://docs.scinet.utoronto.ca/index.php/SOSCIP\_GPU}
\BIBentrySTDinterwordspacing

\bibitem{krizhevsky2009learning}
A.~Krizhevsky and G.~Hinton, ``Learning multiple layers of features from tiny
  images,'' Citeseer, Tech. Rep., 2009.

\bibitem{he2016deep}
K.~He, X.~Zhang, S.~Ren, and J.~Sun, ``Deep residual learning for image
  recognition,'' in \emph{Proceedings of the IEEE conference on computer vision
  and pattern recognition}, 2016, pp. 770--778.

\bibitem{tompson2015efficient}
J.~Tompson, R.~Goroshin, A.~Jain, Y.~LeCun, and C.~Bregler, ``Efficient object
  localization using convolutional networks,'' in \emph{Proceedings of the IEEE
  Conference on Computer Vision and Pattern Recognition}, 2015, pp. 648--656.

\bibitem{krizhevsky2014one}
A.~Krizhevsky, ``One weird trick for parallelizing convolutional neural
  networks,'' \emph{arXiv preprint arXiv:1404.5997}, 2014.

\bibitem{srivastava2014dropout}
N.~Srivastava, G.~Hinton, A.~Krizhevsky, I.~Sutskever, and R.~Salakhutdinov,
  ``Dropout: a simple way to prevent neural networks from overfitting,''
  \emph{The Journal of Machine Learning Research}, vol.~15, no.~1, pp.
  1929--1958, 2014.

\bibitem{perez2017effectiveness}
L.~Perez and J.~Wang, ``The effectiveness of data augmentation in image
  classification using deep learning,'' \emph{arXiv preprint arXiv:1712.04621},
  2017.

\bibitem{xu2014scale}
Y.~Xu, T.~Xiao, J.~Zhang, K.~Yang, and Z.~Zhang, ``Scale-invariant
  convolutional neural networks,'' \emph{arXiv preprint arXiv:1411.6369}, 2014.

\bibitem{kang2015simultaneous}
L.~Kang, P.~Ye, Y.~Li, and D.~Doermann, ``Simultaneous estimation of image
  quality and distortion via multi-task convolutional neural networks,'' in
  \emph{Image Processing (ICIP), 2015 IEEE International Conference on}.\hskip
  1em plus 0.5em minus 0.4em\relax IEEE, 2015, pp. 2791--2795.

\bibitem{neelakantan2015adding}
A.~Neelakantan, L.~Vilnis, Q.~V. Le, I.~Sutskever, L.~Kaiser, K.~Kurach, and
  J.~Martens, ``Adding gradient noise improves learning for very deep
  networks,'' \emph{arXiv preprint arXiv:1511.06807}, 2015.

\end{thebibliography}
%



\end{document}